\newtheorem{definition}{Definition}
\newtheorem{proposition}{Proposition}
\newtheorem{lemma}{Lemma}
\newtheorem{theorem}{Theorem}
\def\BibTeX{{\rm B\kern-.05em{\sc i\kern-.025em b}\kern-.08em
    T\kern-.1667em\lower.7ex\hbox{E}\kern-.125emX}}
\begin{document}
\history{Date of publication xxxx 00, 0000, date of current version xxxx 00, 0000.}
\doi{10.1109/ACCESS.2023.3235922}

\title{Unexploitable games and unbeatable strategies}
\author{\uppercase{Masahiko Ueda}\authorrefmark{1}}

\address[1]{Graduate School of Sciences and Technology for Innovation, Yamaguchi University, Yamaguchi 753-8511, Japan (e-mail: m.ueda@yamaguchi-u.ac.jp)}
\tfootnote{This study was supported by JSPS KAKENHI Grant Number JP20K19884 and Inamori Research Grants.}

\markboth
{Author \headeretal: Preparation of Papers for IEEE TRANSACTIONS and JOURNALS}
{Author \headeretal: Preparation of Papers for IEEE TRANSACTIONS and JOURNALS}

\corresp{Corresponding author: Masahiko Ueda (e-mail: m.ueda@yamaguchi-u.ac.jp).}

\begin{abstract}
Imitation is simple behavior which uses successful actions of others in order to deal with one's own problems.
Because success of imitation generally depends on whether profit of an imitating agent coincides with those of other agents or not, game theory is suitable for specifying situations where imitation can be successful.
One of the concepts describing successfulness of imitation in repeated two-player symmetric games is unbeatability.
For infinitely repeated two-player symmetric games, a necessary and sufficient condition for some imitation strategy to be unbeatable was specified.
However, situations where imitation can be unbeatable in multi-player games are still not clear.
In order to analyze successfulness of imitation in multi-player situations, here we introduce a class of totally symmetric games called unexploitable games, which is a natural extension of two-player symmetric games without exploitation cycles.
We then prove that, for infinitely repeated unexploitable games, there exist unbeatable imitation strategies.
Furthermore, we also prove that, for infinitely repeated non-trivial unexploitable games, there exist unbeatable zero-determinant strategies, which unilaterally enforce some relationships on payoffs of players.
These claims are demonstrated in the public goods game, which is the simplest unexploitable game.
These results show that there are situations where imitation is unbeatable even in multi-player games.
\end{abstract}

\begin{keywords}
Imitation strategies, repeated games, unbeatable strategies, zero-determinant strategies.
\end{keywords}

\titlepgskip=-15pt

\maketitle

\section{Introduction}
\label{sec:intro}
\PARstart{I}{mitation} is simple behavior which uses successful actions of others in order to deal with one's own problems.
Copying the behavior of others is sometimes successful in human society \cite{RBCet2010}.
In biological systems, a mechanism such that genes are passed from parents to offspring is naturally adopted \cite{HarCla1997}.
In economics, it has been discussed that equilibria are realized not due to rational thinking but due to imitation \cite{Veg1997,Sch1998}.
On the other hand, piracy is generally prohibited in creative activities \cite{BelPei2010}.
When there are multiple agents who imitate others, complicated dynamics can occur \cite{SuzKan1994}.
In general, success of imitation depends on whether profit of a copying agent coincides with those of other agents or not, which is a subject of game theory \cite{FudTir1991}.

When we restrict our attention to repeated two-player symmetric games, one of the candidates which characterize successfulness of imitation is unbeatability \cite{DOS2012b,DOS2014}.
Unbeatability literally means that the imitating agent cannot be beaten, or, cannot be exploited.
For infinitely repeated two-player symmetric games, it has been known that there exists an unbeatable imitation strategy, called the Imitate-If-Better (IIB) strategy, if and only if the game does not contain any cycles similar to the rock-paper-scissors cycle \cite{DOS2012b}.
Furthermore, it has also been proved that the Tif-for-Tat (TFT) strategy \cite{RCO1965,AxeHam1981}, which imitates the opponent's previous action, is unbeatable if and only if the game is a potential game \cite{DOS2014}.

Recently, it has been pointed out that the existence of unbeatable imitation strategies may be related to the existence of unbeatable zero-determinant (ZD) strategies \cite{Ued2022b}.
ZD strategies are a class of memory-one strategies in infinitely repeated games, which unilaterally enforce linear relationships between payoffs of players \cite{PreDys2012}.
Although ZD strategies were originally introduced in the prisoner's dilemma game, they have been extended to broader situations \cite{HRZ2015,HTS2015,HDND2016,McAHau2017,MamIch2020}.
Furthermore, the control ability of ZD strategies has also been extended \cite{Ued2021,Ued2021b,Ued2022c}.
As application studies, ZD strategies have been applied to several problems in the field of information and communications technology, such as resource sharing in wireless networks \cite{DKL2014,ZNSet2016}, mining in blockchain \cite{TLYet2019}, crowdsourcing \cite{TLCet2019}, the Internet of Things \cite{WSHet2019}, cloud computing \cite{ZTYet2020}, and data trading \cite{WSHet2021}.
In terms of unbeatable strategies, the author proved that the unbeatable TFT is a ZD strategy in two-player symmetric games \cite{Ued2022}.
In addition, for two-player symmetric games where the unbeatable imitation strategy exists, unbeatable ZD strategies also exist \cite{Ued2022b}.
These results suggest that there may be some relation between the existence of unbeatable imitation strategies and the existence of unbeatable ZD strategies, even in multi-player symmetric games.

The purpose of this paper is investigating the existence of unbeatable imitation strategies and unbeatable ZD strategies in multi-player totally symmetric games \cite{Pla2023}.
Concretely, we introduce a class of multi-player totally symmetric games called unexploitable games.
Unexploitable games are an extension of two-player symmetric games without generalized-rock-paper-scissors cycles to multi-player case.
We show that the unexploitable property of the stage game is a sufficient condition for the existence of unbeatable ZD strategies and the unbeatable IIB strategy.
We also explain these results in the public goods game \cite{RotKag1995}, which is the simplest example of unexploitable games.

This paper is organized as follows.
In Section \ref{sec:setup}, we define a model of infinitely repeated multi-player totally symmetric games.
In Section \ref{sec:preliminaries}, we introduce basic concepts used in the later sections, such as ZD strategies, unbeatable strategies, and imitation strategies.
In Section \ref{sec:results}, we introduce the concept of unexploitable games, and provide our main results on the existence of unbeatable ZD strategies and unbeatable imitation strategies in unexploitable games.
In Section \ref{sec:example}, we demonstrate our results in the public goods game.
Section \ref{sec:conclusion} is devoted to concluding remarks.

\section{Setup}
\label{sec:setup}
We consider an infinitely repeated game with $N$ players \cite{MaiSam2006}.
The stage game is $G:= \left( \mathcal{N}, \left\{ A_j \right\}_{j\in \mathcal{N}}, \left\{ s_j \right\}_{j\in \mathcal{N}} \right)$, where $\mathcal{N}:=\left\{ 1, \cdots, N \right\}$ is the set of players, $A_j$ is the set of actions of player $j\in \mathcal{N}$, and $s_j: \prod_{k=1}^N A_k \rightarrow \mathbb{R}$ is the payoff of player $j\in \mathcal{N}$.
We collectively write $\mathcal{A}:=\prod_{j=1}^N A_j$ and $\bm{a}:=\left( a_1, \cdots,  a_N \right)\in \mathcal{A}$, and call $\bm{a}$ an action profile.
We write a probability $L$-simplex by $\Delta_L$.
We also introduce the notations $a_{-j} := \left( a_1, \cdots, a_{j-1}, a_{j+1}, \cdots, a_N \right)\in \prod_{k\neq j} A_k$ and $-j:=\mathcal{N}\backslash \left\{ j \right\}$.
(Below, when we want to emphasize the action of player $j$ in $\bm{a}$, we write $\bm{a}=\left( a_j, a_{-j} \right)$.)
We assume that the stage game is \emph{totally symmetric} \cite{Pla2023}, that is, $A_j=A$ $(\forall j\in \mathcal{N})$ and for every permutation $\pi$ on $\mathcal{N}$,
\begin{align}
 s_{\pi(j)} \left( \bm{a} \right) &= s_j \left( \bm{a}_{\pi} \right)
\end{align}
for any $j\in \mathcal{N}$ and for any $\bm{a}\in \mathcal{A}$, where $\bm{a}_{\pi}:= \left( a_{\pi(1)}, \cdots, a_{\pi(N)} \right)$ and $A$ is some set.
We also assume that $A$ is finite, and write $A=\left\{ 1, \cdots, L \right\}$, where $L\in \mathbb{N}$ is the number of actions.

We repeat the stage game $G$ infinitely.
We write an action of player $j$ at round $t\geq 1$ as $a_j^{(t)}$.
The behavior strategy of player $j\in \mathcal{N}$ is described as $\mathcal{T}_j := \left\{ T^{(t)}_j \right\}_{t=1}^\infty$, where $T^{(t)}_j: \mathcal{A}^{t-1} \to \Delta_{L}$ is the conditional probability at $t$-th round.
We write the expectation of the quantity $B$ with respect to strategies of all players by $\mathbb{E}[B]$.
We consider the case that the payoff of player $j\in \mathcal{N}$ in the infinitely repeated game is given by
\begin{align}
 \mathcal{S}_j &:= \lim_{T\rightarrow \infty} \frac{1}{T} \sum_{t=1}^T \mathbb{E} \left[ s_j\left( \bm{a}^{(t)} \right) \right],
 \label{eq:payoff_repeated}
\end{align}
that is, we consider a repeated game with no discounting.

We remark that we frequently use the prime symbol to generate more variables whose types are the same as ones without the prime symbol.

\section{Preliminaries}
\label{sec:preliminaries}
In this section, we introduce several concepts used in later sections.
We define the marginal distribution at $t$-th round obtained from the joint distribution of action profiles
\begin{align}
 P_t \left( \bm{a}^{(t)} \right) &:= \sum_{\bm{a}^{(t-1)}} \cdots \sum_{\bm{a}^{(1)}} P\left( \bm{a}^{(t)}, \cdots, \bm{a}^{(1)} \right),
\end{align}
for all $\bm{a}^{(t)} \in \mathcal{A}$, and the limit distribution
\begin{align}
 P^* \left( \bm{a} \right) &:= \lim_{T\rightarrow \infty} \frac{1}{T} \sum_{t=1}^T P_{t} \left( \bm{a} \right) \quad (\forall \bm{a}).
\end{align}
We also write the expectation with respect to the limit distribution $P^*$ by $\left\langle \cdots \right\rangle^{*}$.
It should be noted that $\mathcal{S}_k=\left\langle s_k \right\rangle^{*}$ $(\forall k \in \mathcal{N})$.

\subsection{Zero-determinant strategies}
\label{subsec:ZD}
A \emph{time-independent memory-one strategy} of player $j\in \mathcal{N}$ is defined as a strategy such that
\begin{align}
 T^{(t)}_j\left( a_j^{(t)} | \bm{a}^{(t-1)}, \cdots, \bm{a}^{(1)} \right) &= T_j \left( a_j^{(t)} | \bm{a}^{(t-1)} \right)
\end{align}
for $\forall t\geq 2$ and for all $a_j^{(t)}$, $\bm{a}^{(t-1)}$, $\cdots$, $\bm{a}^{(1)}$, where $T_j: \mathcal{A} \to \Delta_{L}$.
For time-independent memory-one strategies $T_j$ of player $j$, we introduce
\begin{align}
 \hat{T}_j\left( a_j | \bm{a}^{\prime} \right) &:= T_j\left( a_j | \bm{a}^{\prime} \right) -  \delta_{a_j, a^{\prime}_j} \quad \left( \forall a_j \in A, \forall \bm{a}^{\prime} \in \mathcal{A} \right),
 \label{eq:PD}
\end{align}
where $\delta_{a, a^\prime}$ is the Kronecker delta.
These quantities describe the difference between the strategy $T_j$ and the Repeat strategy $\delta_{a_j, a^{\prime}_j}$, and called as the \emph{Press-Dyson vectors} \cite{Aki2016,McAHau2016,UedTan2020}.
It should be noted that, due to the properties of the conditional probability $T_j$, the Press-Dyson vectors satisfy several relations.
First, they satisfy
\begin{align}
 \sum_{a_j} \hat{T}_j \left( a_j | \bm{a}^\prime \right) &= 0 \quad \left( \forall \bm{a}^\prime \right)
 \label{eq:PD_normalized}
\end{align}
due to the normalization condition of $T_j$.
Second, they satisfy
\begin{align}
  & \left\{
  \begin{array}{ll}
    -1 \leq \hat{T}_j \left( a_j | \bm{a}^\prime \right) \leq 0 & \left(a_j = a^\prime_j \right) \\
    0 \leq \hat{T}_j \left( a_j | \bm{a}^\prime \right) \leq 1 & \left(a_j \neq a^\prime_j \right)
  \end{array}
  \right.
  \label{eq:PD_range}
\end{align}
for all $a_j$ and all $\bm{a}^\prime$, because $T_j$ takes values in $[0,1]$.

It has been known that the Press-Dyson vectors satisfy the relation called Akin's lemma.
\begin{lemma}[\cite{Aki2016,Ued2022}]
\label{lem:Akin}
The Press-Dyson vectors of player $j$ using a time-independent memory-one strategy satisfy
\begin{align}
 \sum_{\bm{a}^\prime} P^* \left( \bm{a}^\prime \right) \hat{T}_j\left( a_j | \bm{a}^{\prime} \right) &= 0
 \label{eq:Akin}
\end{align}
for all $a_j$.
\end{lemma}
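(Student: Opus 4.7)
The plan is to establish Akin's lemma by a telescoping argument on the one-step marginals of player $j$'s action, exploiting the fact that a time-independent memory-one strategy makes the conditional distribution of $a_j^{(t+1)}$ depend only on $\bm{a}^{(t)}$.

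First, I would fix any $a_j \in A$ and express the marginal probability that player $j$ plays $a_j$ at round $t+1$ as
\begin{align*}
 P_{t+1}(a_j) &= \sum_{\bm{a}^\prime \in \mathcal{A}} P_t(\bm{a}^\prime) \, T_j(a_j | \bm{a}^\prime),
\end{align*}
which is valid because, under a time-independent memory-one strategy for player $j$, the probability of $a_j^{(t+1)}=a_j$ conditioned on the full history is $T_j(a_j|\bm{a}^{(t)})$, so summing out all prior rounds leaves exactly $P_t(\bm{a}^\prime)T_j(a_j|\bm{a}^\prime)$. The key point here is that no assumption on the other players' strategies is needed; only the memory-one, time-independent structure of $T_j$ is used.

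Second, I would rewrite the marginal at round $t$ trivially as $P_t(a_j) = \sum_{\bm{a}^\prime} P_t(\bm{a}^\prime)\,\delta_{a_j, a^\prime_j}$, so that subtracting gives
\begin{align*}
 P_{t+1}(a_j) - P_t(a_j) &= \sum_{\bm{a}^\prime} P_t(\bm{a}^\prime)\, \hat{T}_j(a_j | \bm{a}^\prime),
\end{align*}
by the definition \eqref{eq:PD} of the Press-Dyson vector. This is the crucial identity: it expresses $\sum_{\bm{a}^\prime} P_t(\bm{a}^\prime)\hat{T}_j(a_j|\bm{a}^\prime)$ as a one-step difference of bounded marginals.

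Third, I would average over $t = 1, \ldots, T$ and take $T \to \infty$. The left-hand side telescopes to $(P_{T+1}(a_j)-P_1(a_j))/T$, which vanishes as $T\to\infty$ because probabilities lie in $[0,1]$. The right-hand side becomes $\sum_{\bm{a}^\prime} \bigl[ \frac{1}{T}\sum_{t=1}^T P_t(\bm{a}^\prime) \bigr] \hat{T}_j(a_j|\bm{a}^\prime)$, which converges to $\sum_{\bm{a}^\prime} P^*(\bm{a}^\prime)\hat{T}_j(a_j|\bm{a}^\prime)$ by the definition of $P^*$ and the fact that $\mathcal{A}$ is finite (so the sum and limit can be exchanged). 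Equating the two limits yields \eqref{eq:Akin} for every $a_j$.

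There is no genuine obstacle here; the main thing to be careful about is justifying that the one-step marginal identity holds even though the overall joint process is not Markov (the other players' strategies may be history-dependent). This is handled by noting that marginalising the memory-one conditional $T_j$ over the unseen history of player $j$'s own randomisation still yields the same kernel $T_j(\cdot|\bm{a}^{(t)})$, independent of how the opponents choose $a_{-j}^{(t+1)}$ or earlier actions.
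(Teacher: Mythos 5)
Your proof is correct, and it is essentially the canonical argument: the paper itself states Lemma \ref{lem:Akin} with citations to \cite{Aki2016,Ued2022} and gives no internal proof, and the telescoping identity $P_{t+1}(a_j)-P_t(a_j)=\sum_{\bm{a}^\prime}P_t(\bm{a}^\prime)\hat{T}_j(a_j|\bm{a}^\prime)$ followed by Ces\`{a}ro averaging is exactly the proof in those references. You also correctly handle the two points worth care here --- that the one-step marginal identity needs only the memory-one structure of $T_j$ (not any Markov property of the joint process), and that finiteness of $\mathcal{A}$ justifies exchanging the sum with the $T\to\infty$ limit --- so nothing is missing.
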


Now we introduce an extended version \cite{Ued2021,Ued2022c} of zero-determinant strategies \cite{PreDys2012}.
\begin{definition}
\label{def:ZDS}
A time-independent memory-one strategy of player $j$ is an (extended) \emph{zero-determinant (ZD) strategy} controlling the quantity $B:\mathcal{A}\rightarrow \mathbb{R}$ when its Press-Dyson vectors can be written in the form
\begin{align}
 \sum_{a_j} c_{a_j} \hat{T}_j\left( a_j | \bm{a}^{\prime} \right) &= B \left( \bm{a}^{\prime} \right) \quad \left( \forall \bm{a}^{\prime} \right)
 \label{eq:ZDS}
\end{align}
with some nontrivial coefficients $\left\{ c_{a_j} \right\}$ (that is, not $c_1=\cdots=c_{L}=\mathrm{const.}$) and $B$ is not identically zero.
\end{definition}
In other words, in ZD strategies controlling $B$, $B$ is described by a linear combination of the Press-Dyson vectors.
Because of Lemma \ref{lem:Akin}, ZD strategies unilaterally enforce
\begin{align}
 \left\langle B \right\rangle^{*} &= 0.
 \label{eq:ZDS_linear}
\end{align}
A necessary and sufficient condition for the existence of ZD strategies is given by the following proposition.
\begin{proposition}[\cite{Ued2022b}]
\label{prop:existence}
A ZD strategy of player $j$ controlling $B$ exists if and only if there exist two different actions $\overline{a}, \underline{a} \in A$ of player $j$ such that
\begin{align}
 B \left( \overline{a}, a_{-j} \right) &\geq 0 \quad \left( \forall a_{-j} \right) \nonumber \\
 B \left( \underline{a}, a_{-j} \right) &\leq 0 \quad \left( \forall a_{-j} \right),
 \label{eq:condition_exsitence}
\end{align}
and $B$ is not identically zero.
\end{proposition}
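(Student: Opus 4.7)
The plan is to prove the two implications separately, leveraging only the normalization condition (\ref{eq:PD_normalized}) and the range constraints (\ref{eq:PD_range}) on the Press-Dyson vectors.

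For necessity, I will start from the defining equation (\ref{eq:ZDS}) and apply a shift trick: since $\sum_{a_j} \hat{T}_j(a_j|\bm{a}') = 0$ by (\ref{eq:PD_normalized}), subtracting $c_{a'_j}$ from every coefficient preserves the left-hand side, yielding $\sum_{a_j \neq a'_j}(c_{a_j} - c_{a'_j})\hat{T}_j(a_j|\bm{a}') = B(\bm{a}')$. By (\ref{eq:PD_range}), every surviving Press-Dyson vector satisfies $\hat{T}_j(a_j|\bm{a}') \geq 0$. Letting $\overline{a}$ be an action minimizing $c_{a_j}$ and $\underline{a}$ an action maximizing it, the nontriviality of the coefficients forces $\overline{a} \neq \underline{a}$. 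For $\bm{a}' = (\overline{a}, a_{-j})$ every factor $c_{a_j} - c_{\overline{a}}$ is nonnegative, so $B(\overline{a}, a_{-j}) \geq 0$; the symmetric argument gives $B(\underline{a}, a_{-j}) \leq 0$.

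For sufficiency, I will explicitly construct a memory-one strategy realizing a ZD structure. Fix $c_{\overline{a}} := 0$ and $c_{\underline{a}} := \alpha$ for a large constant $\alpha > 0$ to be chosen, and leave $c_{a_j}$ for $a_j \notin \{\overline{a}, \underline{a}\}$ as free parameters. Restrict the support of $T_j(\cdot | \bm{a}')$ to $\{\overline{a}, \underline{a}\}$, so that the only free quantity is $q(\bm{a}') := T_j(\underline{a}|\bm{a}')$. Substituting into (\ref{eq:ZDS}) and splitting into the three cases $a'_j = \overline{a}$, $a'_j = \underline{a}$, and $a'_j \notin \{\overline{a},\underline{a}\}$, each case yields a closed-form expression for $q(\bm{a}')$. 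In the first two cases the sign hypotheses $B(\overline{a},\cdot) \geq 0$ and $B(\underline{a},\cdot) \leq 0$ immediately give $q(\bm{a}') \in [0,1]$, provided $\alpha$ is taken at least as large as $\max\{B(\overline{a}, a_{-j}),\, -B(\underline{a}, a_{-j})\}$ over all $a_{-j}$.

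The delicate point, and what I expect to be the main obstacle, is the third case $a'_j \notin \{\overline{a}, \underline{a}\}$, which only arises when $L > 2$: there is no sign control on $B(a'_j, a'_{-j})$, and the ZD equation forces $q(\bm{a}') = (B(a'_j, a'_{-j}) + c_{a'_j})/\alpha$. To guarantee $q(\bm{a}') \in [0,1]$, I will choose each $c_{a'_j}$ in the interval $[-\min_{a'_{-j}} B(a'_j, a'_{-j}),\, \alpha - \max_{a'_{-j}} B(a'_j, a'_{-j})]$, which becomes nonempty once $\alpha$ exceeds the range of $B$ restricted to that slice. Enlarging $\alpha$ if necessary meets all such constraints simultaneously. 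Finally, $c_{\overline{a}} = 0 \neq \alpha = c_{\underline{a}}$ makes the coefficients nontrivial and $B \not\equiv 0$ holds by hypothesis, so the constructed memory-one rule is a bona fide ZD strategy controlling $B$.
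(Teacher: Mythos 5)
Your proof is correct, and there is nothing in the paper to compare it against: Proposition \ref{prop:existence} is stated without proof, imported from \cite{Ued2022b}. Both of your directions check out. For necessity, the shift trick is valid because (\ref{eq:PD_normalized}) lets you replace $c_{a_j}$ by $c_{a_j}-c_{a^\prime_j}$ without changing the left-hand side of (\ref{eq:ZDS}); the diagonal term then drops out, the surviving entries are nonnegative by (\ref{eq:PD_range}), and you correctly identified $\overline{a}$ as a \emph{minimizer} of $c$ (evaluating at $a^\prime_j=\overline{a}$ makes every shifted coefficient nonnegative, hence $B(\overline{a},a_{-j})\geq 0$), with nontriviality of the coefficients forcing $\overline{a}\neq\underline{a}$. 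For sufficiency, your two-point-support construction works: writing $q(\bm{a}^\prime)=T_j(\underline{a}|\bm{a}^\prime)$ with $c_{\overline{a}}=0$ and $c_{\underline{a}}=\alpha$, the three cases force $q=B(\overline{a},a^\prime_{-j})/\alpha$, $q=1+B(\underline{a},a^\prime_{-j})/\alpha$, and $q=\left(B(a^\prime_j,a^\prime_{-j})+c_{a^\prime_j}\right)/\alpha$, and you correctly isolated the genuine obstacle, the rows with $a^\prime_j\notin\{\overline{a},\underline{a}\}$ where $B$ has no sign control; your feasibility interval for $c_{a^\prime_j}$ is exactly right ($q\geq 0$ gives $c_{a^\prime_j}\geq -\min_{a^\prime_{-j}}B$, $q\leq 1$ gives $c_{a^\prime_j}\leq \alpha-\max_{a^\prime_{-j}}B$), it is nonempty once $\alpha$ exceeds the oscillation of $B$ on that slice, and finiteness of $A$ yields a single $\alpha$ meeting all constraints at once. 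This coefficient-shift-plus-two-action-mixing argument is essentially the one in the cited source \cite{Ued2022b}, in the tradition of the original Press--Dyson construction, so your blind reconstruction is faithful and complete.
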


\subsection{Unbeatable strategies}
\label{subsec:unbeatable}
Unbeatable strategies are literally those which cannot be beaten by other players in repeated games.
Unbeatable strategies were originally introduced in two-player symmetric games \cite{DOS2012b,DOS2014}.
(They are also called as rival strategies \cite{HTS2015}.)
A multi-player version of unbeatable strategies has also been investigated in the repeated public goods game, which also achieves mutual cooperation \cite{MurBae2018,MurBae2021}.
Here we define unbeatable strategies in multi-player totally symmetric games, following the previous studies.
\begin{definition}
\label{def:unbeatable}
A strategy $\mathcal{T}_j$ of player $j\in \mathcal{N}$ in repeated totally symmetric games is \emph{unbeatable} if
\begin{align}
 \mathcal{S}_j &\geq \mathcal{S}_k \quad \left( \forall k \neq j \right)
\end{align}
for all strategies $\mathcal{T}_k$ $(\forall k\neq j)$.
\end{definition}
Several examples of unbeatable strategies have been found in the prisoner's dilemma game \cite{PreDys2012,YBC2017,MurBae2020}, in two-player symmetric potential games \cite{DOS2014,Ued2022}, in two-player symmetric games with no generalized rock-paper-scissors cycles \cite{DOS2012b,Ued2022b}, and in the public goods game \cite{MurBae2018,MurBae2021}.
In addition, a uniform strategy in the repeated rock-paper-scissors game is also an unbeatable strategy.

\subsection{Imitation strategies}
\label{subsec:imitation}
Imitation strategies are a class of strategies in repeated games where an action is chosen from the set of actions used in the previous round.
A typical example is the Tit-for-Tat (TFT) strategy \cite{RCO1965,AxeHam1981,DOS2014} in two-player symmetric games, which returns the previous action of the opponent.
Another example is the Imitate-If-Better (IIB) strategy, which imitates the opponent's previous action if the player was beaten in the previous round \cite{DOS2012b}.
Extension of IIB to multi-player totally-symmetric games is straightforward.
\begin{definition}
\label{def:IIB}
The \emph{Imitate-If-Better} (IIB) strategy of player $j\in \mathcal{N}$ is a time-independent memory-one strategy such that
\begin{align}
 T_j\left( a_j | \bm{a}^{\prime} \right) &= \delta_{a_j, a^\prime_{\arg\max_{k\neq j}s_k(\bm{a}^\prime)}} \mathbb{I} \left( s_j(\bm{a}^\prime) < \max_{k\neq j}s_k(\bm{a}^\prime) \right) \nonumber \\
 & \quad + \delta_{a_j, a^\prime_j} \mathbb{I} \left( s_j(\bm{a}^\prime) \geq \max_{k\neq j}s_k(\bm{a}^\prime) \right) \nonumber \\
 & \quad \left( \forall a_j, \forall \bm{a}^\prime \right),
 \label{eq:IIB}
\end{align}
where $\mathbb{I}(\cdots)$ is an indicator function which returns 1 if $\cdots$ holds and 0 otherwise.
It should be noted that, if several players are contained in $\arg\max_{k\neq j}s_k(\bm{a}^\prime)$, each player in the set is chosen with equal probability.
\end{definition}
For $N=2$, it has been known that IIB is unbeatable if and only if the state game does not contain any generalized rock-paper-scissors cycles \cite{DOS2012b}.

\section{Results}
\label{sec:results}
The purpose of this paper is to find situations where unbeatable ZD strategies or unbeatable imitation strategies exist for the case $N>2$.
For this purpose, as an extension of games without generalized rock-paper-scissors cycles in the case $N=2$ \cite{DOS2012a,DOS2012b}, we introduce the concept of unexploitable games.
First, we define
\begin{align}
 G_j\left( \bm{a} \right) &:= \max_{k\neq j}s_k\left( \bm{a} \right) \quad (\forall \bm{a})
 \label{eq:G}
\end{align}
for all $j\in \mathcal{N}$.
\begin{definition}
\label{def:unexploitable}
A stage game is an \emph{unexploitable game} if, for all subset of the action space $\forall A^\prime \subseteq A$, there exists at least an action $a^*\left( A^\prime \right) \in A^\prime$ of player $j$ such that
\begin{align}
 & s_j \left( a^*\left( A^\prime \right), a_{-j} \right) - G_j\left( a^*\left( A^\prime \right), a_{-j} \right) \geq 0 \nonumber \\
 & \quad \left( \forall a_{-j}\in A^{\prime N-1} \right).
 \label{eq:unex}
\end{align}
\end{definition}
It should be noted that the definition does not depend on $j$ because the stage game is totally symmetric.
We also call an unexploitable game \emph{trivial} if $s_k=s$ $(\forall k \in \mathcal{N})$ with a function $s$.
For trivial unexploitable games, since the payoffs of all players are always the same, all strategies are trivially unbeatable.
Below we mainly focus on non-trivial unexploitable games.

\subsection{Unbeatable ZD strategies in unexploitable games}
\label{subsec:unbeatable_ZD}
We first prove that unbeatable ZD strategies exist in non-trivial unexploitable games.
\begin{theorem}
\label{thm:unbeatable_ZD}
If the stage game is a non-trivial unexploitable game, then an unbeatable ZD strategy exists.
\end{theorem}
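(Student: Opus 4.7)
The plan is to exhibit a ZD strategy controlling the quantity $B(\bm{a}) := s_j(\bm{a}) - G_j(\bm{a})$. If such a strategy exists, then by Akin's lemma (Lemma~\ref{lem:Akin}) one has $\langle B \rangle^{*} = 0$, i.e.\ $\mathcal{S}_j = \langle G_j \rangle^{*} \geq \langle s_k \rangle^{*} = \mathcal{S}_k$ for every $k \neq j$, so the strategy is unbeatable. By Proposition~\ref{prop:existence}, it therefore suffices to exhibit two distinct actions $\overline{a}, \underline{a} \in A$ with $B(\overline{a}, a_{-j}) \geq 0$ and $B(\underline{a}, a_{-j}) \leq 0$ for every $a_{-j} \in A^{N-1}$, together with the fact that $B$ is not identically zero.

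Finding $\overline{a}$ is immediate: applying the unexploitable condition with $A^\prime = A$ yields $\overline{a} := a^{*}(A)$ satisfying $s_j(\overline{a}, a_{-j}) \geq G_j(\overline{a}, a_{-j})$ for all $a_{-j}$. The substantive work is in producing $\underline{a}$. My plan is to iterate unexploitability: set $A_1 := A$ and $A_{i+1} := A_i \setminus \{\overline{a}_i\}$ where $\overline{a}_i := a^{*}(A_i)$, obtaining an enumeration $\overline{a}_1, \ldots, \overline{a}_L$ of $A$, and then take $\underline{a} := \overline{a}_L$.

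To verify that this $\underline{a}$ works, fix $a_{-j} \in A^{N-1}$, set $\bm{a} := (\underline{a}, a_{-j})$, and let $i$ be the smallest index such that $\overline{a}_i$ appears as some coordinate of $\bm{a}$. If $i = L$ then every coordinate of $\bm{a}$ equals $\underline{a}$, and total symmetry forces $s_k(\bm{a}) = s_j(\bm{a})$ for every $k$, so $B(\bm{a}) = 0$. Otherwise $i < L$, hence $\overline{a}_i \neq \underline{a}$, and so $\overline{a}_i$ is played by some opponent $k \neq j$. Let $\pi$ be the transposition swapping $j$ and $k$: total symmetry gives $s_j(\bm{a}) = s_k(\bm{a}_\pi)$ and $s_k(\bm{a}) = s_j(\bm{a}_\pi)$, while in $\bm{a}_\pi$ player $j$ plays $\overline{a}_i = a^{*}(A_i)$ against opponents whose actions all lie in $A_i$ (by minimality of $i$). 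The unexploitable condition for $A_i$ then gives $s_j(\bm{a}_\pi) \geq G_j(\bm{a}_\pi) \geq s_k(\bm{a}_\pi)$, which translates back to $s_j(\bm{a}) \leq s_k(\bm{a}) \leq G_j(\bm{a})$, so $B(\bm{a}) \leq 0$.

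Finally I would check $\overline{a} \neq \underline{a}$ and $B \not\equiv 0$. Non-triviality forces $L \geq 2$, so the enumeration automatically gives $\overline{a}_1 \neq \overline{a}_L$; and non-triviality also provides some profile at which the players' payoffs are not all equal, from which a permutation sending $j$ to the minimum-payoff player produces a profile on which $B$ is strictly negative. The main obstacle I anticipate is the second step: unexploitability provides dominant actions in each subset, not dominated ones, and it is the symmetry-swap argument that bridges the gap, converting the best-response property of $\overline{a}_i$ within $A_i$ into the dominated-response property needed for $\underline{a} = \overline{a}_L$ against arbitrary opponent profiles in $A^{N-1}$.
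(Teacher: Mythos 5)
Your proposal is correct and follows essentially the same route as the paper: the same controlled quantity $B = s_j - G_j$, the same appeal to Proposition~\ref{prop:existence} with $\overline{a}=a^*(A)$, and the same recursive peeling $A^{(l+1)}=A^{(l)}\setminus\{a^*(A^{(l)})\}$ to identify $\underline{a}=a^*(A^{(L)})$ as the weakest action. Your verification that $B(\underline{a},a_{-j})\leq 0$ is a direct transposition-swap argument where the paper argues by contradiction, but both hinge on the identical core step (the minimal level $A^{(l)}$ containing the profile and unexploitability applied to the opponent playing $a^*(A^{(l)})$), and your explicit checks that $L\geq 2$ and $B\not\equiv 0$ only make precise what the paper asserts without comment.
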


\begin{proof}
Because the game is unexploitable, we can recursively define
\begin{align}
 A^{(l)} &:= A\backslash \left\{ a^*\left( A^{(1)} \right), \cdots, a^*\left( A^{(l-1)} \right) \right\} \quad (1\leq l\leq L).
 \label{eq:Al}
\end{align}
(When there are several candidates for each $a^*\left( A^{(l)} \right)$, we choose one of them.)
Particularly, $a^*\left( A^{(1)} \right)=a^*\left( A \right)$ is the strongest action
\begin{align}
 & s_j \left( a^*\left( A^{(1)} \right), a_{-j} \right) - G_j\left( a^*\left( A^{(1)} \right), a_{-j} \right) \geq 0 \nonumber \\
 & \quad \left( \forall a_{-j}\in A^{N-1} \right).
 \label{eq:strongest}
\end{align}
By the definition, $\left\{ a^*\left( A^{(L)} \right) \right\}=A^{(L)}\subset A^{(L-1)}\subset \cdots \subset A^{(1)}=A$.

We now prove that $a^*\left( A^{(L)} \right)$ is the weakest action:
\begin{align}
 & s_j \left( a^*\left( A^{(L)} \right), a_{-j} \right) - G_j\left( a^*\left( A^{(L)} \right), a_{-j} \right) \leq 0 \nonumber \\
 & \quad \left( \forall a_{-j}\in A^{N-1} \right).
 \label{eq:weakest}
\end{align}
Assume to the contrary that
\begin{align}
 & s_j \left( a^*\left( A^{(L)} \right), a_{-j} \right) - G_j\left( a^*\left( A^{(L)} \right), a_{-j} \right) > 0 \nonumber \\
 & \quad \left( \exists a_{-j}\in A^{N-1} \right).
\end{align}
We write this $a_{-j}$ as $\tilde{a}_{-j}$ and define $\tilde{\bm{a}}:=\left( a^*\left( A^{(L)} \right), \tilde{a}_{-j} \right)$.
If all players use $a^*\left( A^{(L)} \right)$, we obtain
\begin{align}
 & s_j \left( a^*\left( A^{(L)} \right), \cdots, a^*\left( A^{(L)} \right) \right) \nonumber \\
 & \quad - G_j\left( a^*\left( A^{(L)} \right), \cdots, a^*\left( A^{(L)} \right) \right) = 0
\end{align}
because the game is totally symmetric.
Therefore, at least one player uses the action which is not equal to $a^*\left( A^{(L)} \right)$ in $\tilde{\bm{a}}$.
We consider the minimal $A^{(l)}$ such that $\tilde{\bm{a}} \in A^{(l)N}$.
(It should be noted that $1\leq l \leq L-1$.)
Then, there exists at least one player $j^\prime \neq j$ such that $\tilde{a}_{j^\prime} = a^*\left( A^{(l)} \right)$.
Due to the symmetry of the game and the definition of $a^*\left( A^{(l)} \right)$,
\begin{align}
 s_{j^\prime} \left( \tilde{\bm{a}} \right) - G_{j^\prime} \left( \tilde{\bm{a}} \right) &\geq 0.
\end{align}
Then, we obtain
\begin{align}
 s_{j} \left( \tilde{\bm{a}} \right) &> G_{j} \left( \tilde{\bm{a}} \right) = \max_{k\neq j} s_{k} \left( \tilde{\bm{a}} \right) \nonumber \\
 &\geq s_{j^\prime} \left( \tilde{\bm{a}} \right) \nonumber \\
 &\geq G_{j^\prime} \left( \tilde{\bm{a}} \right) = \max_{k\neq j^\prime} s_{k} \left( \tilde{\bm{a}} \right) \nonumber \\
 &\geq s_{j} \left( \tilde{\bm{a}} \right),
\end{align}
leading to contradiction.
Therefore, we obtain Eq. (\ref{eq:weakest}).

Now, we define
\begin{align}
 B\left( \bm{a} \right) &:= s_j \left( \bm{a} \right) - G_j \left( \bm{a} \right) \quad (\forall \bm{a}\in \mathcal{A}).
\end{align}
Because we consider a non-trivial unexploitable game, $B$ is not identically zero.
Then, by writing $\overline{a}=a^*\left( A^{(1)} \right)$ and $\underline{a}=a^*\left( A^{(L)} \right)$, we can apply Proposition \ref{prop:existence}, and find the existence of a ZD strategy of player $j$ controlling $B$.
Such ZD strategy unilaterally enforces $\left\langle s_j \right\rangle^{*}=\left\langle G_j \right\rangle^{*}$.

Finally, since
\begin{align}
 \max_{k\neq j} \mathcal{S}_k &= \max_{k\neq j} \left\langle s_k \right\rangle^{*} \nonumber \\
 &= \max_{k\neq j} \sum_{\bm{a}} P^* \left( \bm{a} \right) s_k \left( \bm{a} \right) \nonumber \\
 &= \sum_{\bm{a}} P^* \left( \bm{a} \right) s_{\arg \max_{k\neq j} \sum_{\bm{a}^\prime} P^* \left( \bm{a}^\prime \right) s_k \left( \bm{a}^\prime \right)} \left( \bm{a} \right) \nonumber \\
 &\leq \sum_{\bm{a}} P^* \left( \bm{a} \right) \max_{k\neq j} s_k \left( \bm{a} \right) \nonumber \\
 &= \left\langle G_j \right\rangle^{*},
 \label{eq:max_G}
\end{align}
we conclude that this ZD strategy unilaterally enforces $\mathcal{S}_j \geq \max_{k\neq j} \mathcal{S}_k$, that is, it is unbeatable.
\end{proof}

We remark that the converse of Theorem \ref{thm:unbeatable_ZD} does not hold, even in the case of $N=2$ \cite{Ued2022b}.

\subsection{Unbeatable imitation in unexploitable games}
\label{subsec:unbeatable_IIB}
Next, we prove that IIB (Definition \ref{def:IIB}) is unbeatable in unexploitable games.
\begin{theorem}
\label{thm:unbeatable_IIB}
If the stage game is an unexploitable game, then IIB is unbeatable.
\end{theorem}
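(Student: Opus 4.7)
The plan is to first reduce to a single inequality and then to argue by induction on $L=|A|$. For the reduction, equation (\ref{eq:max_G}) proved within Theorem \ref{thm:unbeatable_ZD} gives $\max_{k\neq j}\mathcal{S}_k \leq \langle G_j\rangle^{*}$, so it suffices to show that IIB enforces $\langle B\rangle^{*}\geq 0$ for $B(\bm{a}):=s_j(\bm{a})-G_j(\bm{a})$.

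For the induction, the base case $L=1$ is trivial because total symmetry forces all players' payoffs to coincide. For the inductive step, I would let $e_1:=a^*(A)$ be the globally strongest action, which by the unexploitable property applied to $A^\prime=A$ satisfies $B(e_1,a_{-j})\geq 0$ for every $a_{-j}\in A^{N-1}$. If $a_j^{(\tau)}=e_1$ for some round $\tau$, then $j$ is not beaten at $\tau$, so IIB forces $a_j^{(t)}=e_1$ for all $t\geq \tau$ by induction on $t$, and hence $B\geq 0$ from $\tau$ onwards, yielding $\langle B\rangle^{*}\geq 0$ at once.

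The real work lies in the complementary event that $j$ never plays $e_1$. On this event I would apply a Borel--Cantelli-style argument exploiting the uniform random tie-breaking of IIB: whenever some opponent plays $e_1$ at round $t$, the total symmetry combined with the domination of $e_1$ implies that that opponent lies in $\arg\max_{k\neq j}s_k(\bm{a}^{(t)})$, so if $j$ is simultaneously beaten at $t$ then IIB moves $j$ to $e_1$ at $t+1$ with conditional probability at least $1/(N-1)$. Therefore, on the event that $j$ never plays $e_1$, almost surely only finitely many rounds combine ``some opponent plays $e_1$'' with ``$j$ is beaten''. After some random time $T_0$ one then has, at every round $t\geq T_0$ where an opponent plays $e_1$, that $j$ is not beaten and $B\geq 0$, while at every other round with $t\geq T_0$ the full profile lies in $(A^{(2)})^N$, because $j$'s own action is always in $A^{(2)}:=A\setminus\{e_1\}$ on this event.

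The remaining contribution is then reduced to the sub-game on $A^{(2)}$, which inherits unexploitability directly from the definition. On the subsequence of rounds whose full profile lies in $(A^{(2)})^N$ the IIB dynamics coincide with those of IIB played in the sub-game, because the interleaved rounds where an opponent plays $e_1$ leave $a_j$ unchanged (as $j$ is not beaten there). The induction hypothesis applied to this sub-game of action-set size $L-1$ then gives non-negativity of the time-average of $B$ along that subsequence, and combining with $B\geq 0$ on the other rounds yields $\langle B\rangle^{*}\geq 0$. The hardest part of the proof will be justifying this last reduction rigorously: one needs to identify the sub-game subsequence with a bona-fide IIB play in the sub-game against some (possibly history-dependent and stochastic) opponent strategy on $A^{(2)}$, so that the induction hypothesis can legitimately be invoked. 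This is where the memory-one character of IIB and a careful treatment of conditional distributions enter critically.
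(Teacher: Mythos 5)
Your route is genuinely different from the paper's, and its skeleton is sound. The paper does not induct on $L$: it fixes the whole hierarchy $A^{(1)}\supset\cdots\supset A^{(L)}$ of Eq. (\ref{eq:Al}), introduces the random set $A^\infty$ of actions played by the opponents infinitely often, and shows directly that the IIB player's action is eventually absorbed into an action equivalent to or stronger than $a^*(A^\infty)$, after which $s_j \geq G_j$ holds at almost every round; unbeatability then follows from the time-average definition (\ref{eq:payoff_repeated}), with your reduction via Eq. (\ref{eq:max_G}) being exactly the (implicit) final step of the paper as well. Your individual ingredients check out: an opponent playing $e_1$ always lies in $\arg\max_{k\neq j}s_k$ (apply Eq. (\ref{eq:unex}) with $A^\prime=A$ to that opponent, using total symmetry), hence the $1/(N-1)$ switching bound; $e_1$ is absorbing for IIB; and $A^{(2)}$ inherits unexploitability because Definition \ref{def:unexploitable} quantifies over all subsets of $A$. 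You also correctly induct on the statement ``IIB enforces $\langle B\rangle^{*}\geq 0$'' rather than on unbeatability itself; this matters, since $\langle s_j - G_j\rangle^{*}\geq 0$ is strictly stronger than $\mathcal{S}_j\geq\max_{k\neq j}\mathcal{S}_k$, and plain unbeatability of sub-game IIB would not suffice along the subsequence.

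The genuine gap is the sub-game embedding you flag but do not resolve, and it is more than bookkeeping, for two reasons. First, you condition on the tail event ``$j$ never plays $e_1$''; conditioning on a future event biases IIB's uniform tie-breaking (a Doob $h$-transform), so the conditional process is \emph{not} IIB against any opponent strategy profile, and the induction hypothesis cannot be invoked for it verbatim. The repair is to keep the dichotomy pathwise: the conditional Borel--Cantelli lemma (L\'{e}vy's extension) gives that almost surely either $j$ plays $e_1$ at some round, or only finitely many rounds combine ``an opponent plays $e_1$'' with ``$j$ is beaten'' --- no conditioning needed. Second, even pathwise, the opponents' actions along the random subsequence, after marginalizing out the interleaved rounds, can be correlated across players given the subsequence history, so they need not form a product of behavior strategies as in the paper's model; the induction hypothesis must therefore be proved in the strengthened pathwise form ``almost surely $\liminf_{T} \frac{1}{T}\sum_{t=1}^{T} B\left(\bm{a}^{(t)}\right)\geq 0$ against an arbitrary (possibly correlated) adapted opponent process,'' with Fatou's lemma recovering the expectation version at the end since $B$ is bounded. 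Nothing in your argument (or the paper's) uses independence of the opponents, so the strengthening is harmless, but without it the reduction is circular rather than inductive. For calibration: the paper's own proof is informal at exactly the analogous point --- its ``finally absorbed'' step is your Borel--Cantelli argument left implicit --- so with these two repairs your induction would be a legitimate and arguably more modular alternative proof.
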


\begin{proof}
We define $A^{(l)}$ $(1\leq l \leq L)$ as in Eq. (\ref{eq:Al}), and call $a^*\left( A^{(l)} \right)$ with smaller $l$ ``stronger''.
We consider the situation that player $j$ uses IIB (\ref{eq:IIB}).
Let $A^\infty \subseteq A$ be a set of actions which are played by players $-j$ an infinite number of times.
We now show that the action of player $j$ must converge to an action which is equivalent to or stronger than $a^*\left( A^\infty \right)$.
We consider the minimal $A^{(l)}$ such that $A^\infty \subseteq A^{(l)}$.
Trivially, $a^*\left( A^\infty \right)=a^*\left( A^{(l)} \right)$.
We consider the situation that $a^*\left( A^\infty \right)$ is taken by player $j^\prime \neq j$, the action of player $j$ is $a^*\left( A^{(l^\prime)} \right)$, and actions of other players are all contained in $A^\infty$.
(We remark that such situation exists because of the definition of $A^\infty$.)
Then, there are following three cases.
\begin{enumerate}
\item $l^\prime < l$\\
For this case, player $j$ continues to play $a^*\left( A^{(l^\prime)} \right)$ in the next round, because $a^*\left( A^{(l^\prime)} \right)$ is stronger than all actions in $A^\infty \subset A^{(l^\prime)}$.

\item $l^\prime = l$\\
For this case, player $j$ continues to play $a^*\left( A^{(l)} \right)$ in the next round, because $a^*\left( A^{(l)} \right)$ is the strongest action in $A^\infty$.

\item $l^\prime > l$\\
We define the notation $-\{j, j^{\prime}\}:=\mathcal{N}\backslash \left\{ j, j^{\prime} \right\}$.
We write an action profile in the round as
\begin{align}
 \tilde{\bm{a}} &:= \left( a_j=a^*\left( A^{(l^\prime)} \right), a_{j^\prime}=a^*\left( A^\infty \right), \tilde{a}_{-\{j, j^{\prime}\}} \right)
\end{align}
with $\tilde{a}_{-\{j, j^{\prime}\}} \in A^{\infty N-2}$.
For this case, the following two situations can be considered.

First, if $s_j\left( \tilde{\bm{a}} \right)<s_{j^\prime}\left( \tilde{\bm{a}} \right)$, player $j$ switches to $a^*\left( A^\infty \right)$ with a finite probability in the next round, since $\tilde{\bm{a}}\in A^{(l)N}$ and
\begin{align}
  s_{j^\prime} \left( \tilde{\bm{a}} \right) &\geq \max_{k\neq j^\prime} s_k \left( \tilde{\bm{a}} \right).
  \label{eq:s_jp_unex}
\end{align}
(It should be noted that player $j$ may switch to $\tilde{a}_{j^{\prime\prime}}$ such that $s_j\left( \tilde{\bm{a}} \right)<s_{j^\prime}\left( \tilde{\bm{a}} \right)=s_{j^{\prime\prime}}\left( \tilde{\bm{a}} \right)$.)
However, since $a^*\left( A^\infty \right)$ is observed as an action of players $-j$ an infinite number of times, the action of player $j$ is finally absorbed to $a^*\left( A^\infty \right)$.

Second, if $s_j\left( \tilde{\bm{a}} \right)=s_{j^\prime}\left( \tilde{\bm{a}} \right)$, player $j$ continues to play $a^*\left( A^{(l^\prime)} \right)$ in the next round due to Eq. (\ref{eq:s_jp_unex}).
It should be noted that player $j$ is not beaten for the action profile $\tilde{\bm{a}}$.
If such equality holds every time player $j$ takes $a^*\left( A^{(l^\prime)} \right)$ and a player in $-j$ takes $a^*\left( A^\infty \right)$, $a^*\left( A^{(l^\prime)} \right)$ is actually equivalent to $a^*\left( A^\infty \right)$.
Otherwise, this situation is reduced to the first situation.
\end{enumerate}

Therefore, we conclude that the action of player $j$ converges to an action which is equivalent to or stronger than $a^*\left( A^\infty \right)$.
Since the payoffs in infinitely repeated games are defined by the time average of payoffs in each round (\ref{eq:payoff_repeated}), this fact implies that IIB is unbeatable.
\end{proof}

One may be interested in whether the converse of Theorem \ref{thm:unbeatable_IIB} holds or not.
For $N=2$, it has been known that the converse is true \cite{DOS2012b}.
However, for $N>2$, only the following theorem is obtained at this stage.
Here, we call the complement of unexploitable games in all multi-player totally symmetric games as \emph{exploitable games}.
In addition, we introduce the following concept.
\begin{definition}
\label{def:nodegeneracy}
In a multi-player totally symmetric game, if $a_i\neq a_j$ means $s_i(\bm{a})\neq s_j(\bm{a})$ for all pairs of players $(i,j)\in \mathcal{N}^2$ and for all $\bm{a}\in \mathcal{A}$, such game is called \emph{a game with no degeneracy}.
\end{definition}

\begin{theorem}
\label{thm:exploitable_IIB}
If the stage game is an exploitable game with no degeneracy, then IIB can be beaten.
\end{theorem}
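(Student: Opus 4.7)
The plan is to construct strategies for $-j$ that trap IIB inside the exploitable subset and impose a uniform payoff gap in favour of a designated player. Negating Definition~\ref{def:unexploitable}, exploitability yields a subset $A^\prime \subseteq A$ such that for every $a \in A^\prime$ there is an exploiter profile in $A^{\prime N-1}$. Fix a player $j^\prime \neq j$. Using total symmetry of the stage game, for each $a \in A^\prime$ I may choose one specific exploiter profile $\sigma(a) \in A^{\prime N-1}$ in which $j^\prime$ attains $G_j(a, \sigma(a))$; call $\sigma_{j^\prime}(a)\in A^\prime$ the designated exploiting action.

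Each player $k \neq j$ is then assigned the time-independent memory-one strategy that, on input $\bm{a}^{(t-1)}$, deterministically computes $j$'s next IIB action $a_j^{(t)}$ from (\ref{eq:IIB}) and then plays the $k$-th component of $\sigma(a_j^{(t)})$. The round-$1$ action of IIB, which is left free by Definition~\ref{def:IIB}, is initialized inside $A^\prime$, and $-j$ play $\sigma(a_j^{(1)})$ in round $1$. The no-degeneracy hypothesis is used here to make IIB's argmax unambiguous: whenever several players in $-j$ tie for $\arg\max_{k \neq j} s_k(\bm{a}^{(t-1)})$ they must share a common action, so the imitation target in (\ref{eq:IIB}) is a single well-defined element of $A^\prime$, regardless of how (\ref{eq:IIB}) breaks ties.

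An induction on $t$ then yields the invariant $a_j^{(t)} \in A^\prime$. In the induction step the joint profile $(a_j^{(t)}, \sigma(a_j^{(t)}))$ strictly exploits $j$ by construction, so IIB switches $j$ to the common argmax action, which by the previous paragraph is exactly $\sigma_{j^\prime}(a_j^{(t)}) \in A^\prime$. Moreover $\sigma_{j^\prime}(a_j^{(t)}) \neq a_j^{(t)}$, again by no-degeneracy (since $s_j < s_{j^\prime}$ forces $a_j \neq a_{j^\prime}$), so the switch really does occur. Finiteness of $A^\prime$ then makes $\delta := \min_{a \in A^\prime}\left( s_{j^\prime}(a,\sigma(a)) - s_j(a,\sigma(a)) \right)$ strictly positive, and averaging round-by-round yields $\mathcal{S}_{j^\prime} - \mathcal{S}_j \geq \delta > 0$, so IIB is not unbeatable.

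The step I expect to require the most care is the deterministic bookkeeping around IIB's argmax: making sure that the unique imitation target is exactly $\sigma_{j^\prime}(a_j^{(t)})$ and that the play never leaks out of $A^\prime$. This is precisely the content provided by the no-degeneracy hypothesis; without it, a tie in $\arg\max_{k \neq j} s_k$ could send IIB to an action not in the range of $\sigma_{j^\prime}$, thereby breaking the invariant that drives the whole argument.
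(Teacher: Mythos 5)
Your proposal is correct and follows essentially the same route as the paper's proof: extract the exploitable subset $A^\prime$, use total symmetry to permute each exploiter profile so that a single fixed player (your $j^\prime$, the paper's $k^{\prime\prime}$) always attains the maximum, invoke no degeneracy to make IIB's imitation target a deterministic action in $A^\prime$, and conclude by the invariant that play never leaves $A^\prime$. Your version merely spells out the adversaries' memory-one strategies, the induction, and the uniform gap $\delta$ more explicitly than the paper does.
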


\begin{proof}
We consider the case that player $j\in \mathcal{N}$ uses IIB.
If the stage game is an exploitable game, there exists at least one subset of the action space $A^\prime \subseteq A$ such that, for all $a_j\in A^\prime$ there exists at least one $a_{-j}\in A^{\prime N-1}$ such that
\begin{align}
 s_j \left( a_j, a_{-j} \right) - G_j\left( a_j, a_{-j} \right) &< 0.
\end{align}
For each $a_j\in A^\prime$, we write such $a_{-j}$ as $\tilde{a}_{-j}\left( a_j \right)$.
We define $k^\prime\left( a_j \right):=\arg \max_{k\neq j}s_k\left( a_j, \tilde{a}_{-j}\left( a_j \right) \right)$.
(When there are several candidates, we choose one of them.)
Because the stage game is totally symmetric, another action profile $\tilde{a}_{-j}^\prime\left( a_j \right)\in A^{\prime N-1}$ of players $-j$ in which the action of player $k^{\prime\prime}\neq j$ is exchanged for that of player $k^\prime\left( a_j \right)$ in $\tilde{a}_{-j}\left( a_j \right)$ also satisfies
\begin{align}
 s_j \left( a_j, \tilde{a}_{-j}^\prime\left( a_j \right) \right) - G_j\left( a_j, \tilde{a}_{-j}^\prime\left( a_j \right) \right) &< 0.
\end{align}
It should be noted that, for this action profile, $k^{\prime\prime}=\arg \max_{k\neq j}s_k\left( a_j, \tilde{a}_{-j}^\prime\left( a_j \right) \right)$ holds.
Then, for games with no degeneracy, when a first action of player $j$ is contained in such $A^\prime$, and players $-j$ always take $\tilde{a}_{-j}^\prime\left( a_j \right)$ for each $a_j$, the next action of player $j$ is always $a_{k^{\prime\prime}}\in A^\prime$.
That is, player $j$ always imitates the previous action of player $k^{\prime\prime}$.
Therefore, the inequality
\begin{align}
 \left\langle s_j \right\rangle^{*} &< \left\langle G_j \right\rangle^{*} = \left\langle s_{k^{\prime\prime}} \right\rangle^{*}
\end{align}
holds for such situation.
\end{proof}

Because we assumed no degeneracy, Theorem \ref{thm:exploitable_IIB} does not imply the inverse of Theorem \ref{thm:unbeatable_IIB}.
Further investigation is needed in order to clarify whether we can remove this assumption or not.

\section{Example}
\label{sec:example}
In this section, we investigate the public goods game as an example of unexploitable games.
The public goods game is one of the simplest $N$-player totally symmetric games \cite{RotKag1995,HWTN2014,PHRT2015}.
The set of actions $A$ is usually described as $A=\{ C, D \}$, where $C$ and $D$ means cooperation and defection, respectively.
The payoff of player $j\in \mathcal{N}$ is given by
\begin{align}
 s_j \left( \bm{a} \right) &= \frac{rc}{N} \sum_{k\neq j} \delta_{a_k, C} + c \left( \frac{r}{N} - 1 \right) \delta_{a_j, C} \quad (\forall \bm{a}),
\end{align}
where $c>0$ represents the unit of contribution and $1<r<N$.
It should be noted that
\begin{align}
 s_j \left( C, a_{-j} \right) - s_j \left( D, a_{-j} \right)  &= c \left( \frac{r}{N} - 1 \right) <0 \quad (\forall a_{-j}),
\end{align}
that is, $D$ dominates $C$. 

Next, for the public goods game, we calculate $G_j$ in Eq. (\ref{eq:G}).
If $D$ is contained in $a_{-j}$, $G_j$ is the payoff of the player taking $D$.
If $D$ is not contained in $a_{-j}$, $a_{-j}=(C, \cdots, C)$ and the payoffs of all players in $-j$ are equal to each other.
Therefore, we obtain
\begin{align}
 G_j \left( \bm{a} \right) &= \mathbb{I}\left( a_{-j} \neq (C, \cdots, C) \right) \frac{rc}{N} \sum_{k} \delta_{a_k, C} \nonumber \\
 & \quad + \mathbb{I}\left( a_{-j} = (C, \cdots, C) \right) \nonumber \\
 & \qquad \times \left[ \frac{rc}{N}(N-1) + \frac{rc}{N}\delta_{a_j, C} - c \right] \nonumber \\
 &= \frac{rc}{N} \sum_{k} \delta_{a_k, C} - c\mathbb{I}\left( a_{-j} = (C, \cdots, C) \right).
\end{align}
Then we find that
\begin{align}
 s_j \left( \bm{a} \right) - G_j \left( \bm{a} \right) &= -c\delta_{a_j, C} +c\mathbb{I}\left( a_{-j} = (C, \cdots, C) \right).
\end{align}
Particularly,
\begin{align}
 & s_j \left( D, a_{-j} \right) - G_j \left( D, a_{-j} \right) = c\mathbb{I}\left( a_{-j} = (C, \cdots, C) \right) \geq 0 \nonumber \\
 & \quad (\forall a_{-j}).
\end{align}
Because possible subsets of $A$ are $\{C\}$, $\{D\}$, and $\{C,D\}$, and Eq. (\ref{eq:unex}) trivially holds for $A^\prime=\{C\}, \{D\}$, this inequality means that the public goods game is an unexploitable game.
Therefore, Theorems \ref{thm:unbeatable_ZD} and \ref{thm:unbeatable_IIB} guarantee the existence of an unbeatable ZD strategy and an unbeatable IIB strategy, respectively.

According to Ref. \cite{Ued2022b}, such unbeatable ZD strategy is constructed as
\begin{align}
 \hat{T}_j \left( C | \bm{a}^\prime \right) &= -\delta_{a^\prime_j, C} + \mathbb{I}\left( a^\prime_{-j} = (C, \cdots, C) \right),
\end{align}
or
\begin{align}
 T_j \left( C | \bm{a}^\prime \right) &= \mathbb{I}\left( a^\prime_{-j} = (C, \cdots, C) \right).
\end{align}
This strategy can be regarded as a variant of TFT, which returns cooperation if and only if all other players took cooperation in the previous round.
In fact, it is reduced to TFT for the case $N=2$.

Furthermore, for the public goods game, IIB is rewritten as
\begin{align}
 T_j\left( a_j | \bm{a}^{\prime} \right) &= \delta_{a_j, a^\prime_{\arg\max_{k\neq j}s_k(\bm{a}^\prime)}} \nonumber \\
 & \qquad \times \mathbb{I} \left( -\delta_{a^\prime_j, C} +\mathbb{I}\left( a^\prime_{-j} = (C, \cdots, C) \right) <0 \right) \nonumber \\
 & \quad + \delta_{a_j, a^\prime_j} \nonumber \\
 & \qquad \times \mathbb{I} \left( -\delta_{a^\prime_j, C} +\mathbb{I}\left( a^\prime_{-j} = (C, \cdots, C) \right) \geq 0 \right) \nonumber \\
 &= \delta_{a_j, D} \mathbb{I} \left( \mathbb{I}\left( a^\prime_{-j} = (C, \cdots, C) \right) =0 \right) \nonumber \\
 & \qquad \times \mathbb{I} \left( \delta_{a^\prime_j, C} =1 \right) \nonumber \\
 & \quad + \delta_{a_j, a^\prime_j} \left[ 1-\mathbb{I} \left( \mathbb{I}\left( a^\prime_{-j} = (C, \cdots, C) \right) =0 \right) \right. \nonumber \\
 & \qquad \left. \times \mathbb{I} \left( \delta_{a^\prime_j, C} =1 \right) \right] \nonumber \\
 &= \delta_{a_j, D} \mathbb{I} \left( a^\prime_{-j} \neq (C, \cdots, C) \right) \delta_{a^\prime_j, C} + \delta_{a_j, a^\prime_j} \nonumber \\
 & \quad - \delta_{a_j, a^\prime_j} \mathbb{I} \left( a^\prime_{-j} \neq (C, \cdots, C) \right) \delta_{a^\prime_j, C}.
\end{align}
Particularly, 
\begin{align}
 T_j\left( C | \bm{a}^{\prime} \right) &=  \delta_{C, a^\prime_j} - \delta_{C, a^\prime_j} \mathbb{I} \left( a^\prime_{-j} \neq (C, \cdots, C) \right) \nonumber \\
 &= \delta_{C, a^\prime_j} \mathbb{I} \left( a^\prime_{-j} = (C, \cdots, C) \right).
\end{align}
This is nothing but the Trigger strategy \cite{Fri1971}, which returns $C$ as long as all players take $C$ in the previous round and forms a cooperative Nash equilibrium.

\section{Concluding remarks}
\label{sec:conclusion}
In this paper, we introduced the concept of unexploitable games as a class of $N$-player totally symmetric games with $N\geq 2$, which is an extension of non-generalized-rock-paper-scissors games for $N=2$.
We then proved that, for infinitely repeated non-trivial unexploitable games, there exist unbeatable ZD strategies.
In addition, we also proved that, for infinitely repeated unexploitable games, the IIB strategy is unbeatable.
These results are a natural extension of ones for $N=2$.
We also showed that the public goods game is an unexploitable game, and constructed an unbeatable ZD strategy.
For the public goods game, unbeatable IIB is also equivalent to the Trigger strategy.

Although we investigated only repeated games in which the same stage game is infinitely repeated, there are many situations where the stage game changes over time or the stage game depends on past plays, such as board games.
Investigating successfulness of imitation in these dynamic games is a challenging future problem.

Before ending this paper, we make three remarks.
The first remark is related to necessary conditions for IIB to be unbeatable.
Although unexploitable property of the stage game is a sufficient condition for IIB to be unbeatable, we could not prove that it is also a necessary condition at this stage.
However, for the case $N=2$, it is a necessary and sufficient condition \cite{DOS2012b}.
Further investigation is needed to specify a necessary and sufficient condition for IIB to be unbeatable.
In addition, one may expect that the existence of unbeatable ZD strategies and that of unbeatable imitation strategies are related to each other.
Techniques used for the proofs of the existence in this paper seem to be similar but different, as we used the existence of the strongest action and the weakest action for the proof of the former, and the order of strength of actions for the proof of the latter.
Clarifying the relation between these two strategy classes is the subject of future work.

The second remark is one about finiteness of the action space.
In this paper, we assumed that the set of actions $A$ is finite.
However, there are many situations that $A$ is infinite, such as the Cournot oligopoly game \cite{FudTir1991}.
Since we have used that $A$ is finite in the proofs of Theorems \ref{thm:unbeatable_ZD} and \ref{thm:unbeatable_IIB}, we do not know these theorems can be straightforwardly extended to the case that $A$ is infinite.
We would like to investigate whether these results can be extended to the case that $A$ is infinite or not, in future.

The third remark is the relation between unexploitable games and potential games \cite{MonSha1996}.
For the case $N=2$, potential games are a special case of unexploitable games \cite{DOS2012b}.
However, for its proof, a special property of $N=2$ was used.
For $N>2$, the relation between unexploitable games and potential games is not clear.
In fact, it has been known that, for the Cournot oligopoly game, which is an example of totally symmetric potential games with infinite action space, IIB can be beaten \cite{DOS2012b}.
We are interested in clarifying the relation between unexploitable games and potential games for general $N\geq 2$.

%\section*{Acknowledgment}

\bibliographystyle{IEEEtran}
\bibliography{unexploitable}

\EOD

\end{document}